\newcolumntype{L}[1]{>{\raggedright\let\newline\\\arraybackslash\hspace{0pt}}m{#1}}
\renewcommand{\section}[1]{%
\bigskip
\begin{center}
\begin{Large}
\normalfont\scshape #1
\medskip
\end{Large}
\end{center}}
\renewcommand{\subsection}[1]{%
\bigskip
\begin{center}
\begin{large}
\normalfont\itshape #1
\end{large}
\end{center}}
\renewcommand{\subsubsection}[1]{%
\vspace{2ex}
\noindent
\textit{#1.}---}
\renewcommand{\tableofcontents}{}
\begin{document}
\begin{flushright}
Version dated: \today
\end{flushright}
\bigskip
\noindent WiSPA: A new approach for dealing with widespread parasitism

\bigskip
\medskip
\begin{center}

\noindent{\Large \bf WiSPA: A new approach for dealing with widespread parasitism}
\bigskip



\noindent {\normalsize \sc Benjamin Drinkwater$^1$, Angela Qiao$^1$, and Michael A. Charleston$^1$ $^2$}\\
\noindent {\small \it 
$^1$School of Information Technologies, University of Sydney, NSW, 2006, Australia;\\$^2$School of Physical Sciences, University Of Tasmania, TAS, 7005, Australia;}
\end{center}
\medskip
\noindent{\bf Corresponding author:} Benjamin Drinkwater, School of Information Technologies (J12), University of Sydney, NSW, 2006, Australia; E-mail: benjamin.drinkwater@sydney.edu.au\\


\vspace{1in}

\subsubsection{Abstract}
Traditionally, studies of coevolving systems have considered cases where a parasite may inhabit only a single host. The case where a parasite may infect many hosts, \emph{widespread parasitism}, has until recently gained little traction. This is due in part to the computational complexity involved in reconstructing the coevolutionary histories where parasites may infect only a single host, which is NP-Hard. Allowing parasites to inhabit more than one host has been seen to only further compound this computationally intractable problem. Recently however, well-established algorithms for estimating the problem instance where a parasite may infect only a single host have been extended to handle widespread parasites. Although this has offered significant progress, it has been noted that these algorithms poorly handle parasites that inhabit phylogenetically distant hosts. 

In this work we extend these previous algorithms to handle cases where parasites inhabit phylogenetically distant hosts using an additional evolutionary event which we call \emph{spread}. Our new framework is shown to infer significantly more congruent coevolutionary histories compared to existing methods over both synthetic and biological data sets. We then apply the newly proposed algorithm, which we call WiSPA (WideSpread Parasitism Analyser), to the well studied coevolutionary system of \emph{Primates} and \emph{Enterobius} (pinworms), where existing methods have been unable to reconcile the widespread parasitism present without permitting additional divergence events. Using WiSPA and the new biological event, spread, we provide the first statistically significant coevolutionary hypothesis for this system. \\
\noindent (Keywords: Coevolution, Phylogeny, Widespread Parasitism, NP-Hard )\\


\vspace{1.5in}

Coevolutionary research has long focused on the area of parasitism due to the health risks which parasites pose to the human population \citep{charleston2006traversing}. Parasites and the associations they form with their hosts have been responsible for a number of the worst emerging diseases impacting global health today, including \emph{Ebola} \citep{peterson2004ecologic}, HIV \citep{siddall1997aids}, and malaria \citep{mu2005host}. Further research into the field of coevolution aims to uncover the deep coevolutionary associations formed by parasitic behaviour, to provide further insights into these deadly diseases \citep{charleston2006cophylogenetic}. 

We often define coevolutionary systems in terms of an independent phylogeny and a corresponding dependent phylogeny which have formed a macro-scale coevolutionary bond. One approach that is often applied to evaluating such evolutionary relationships is the field of \emph{cophylogenetics}, which provides a framework to evaluate whether evolutionary histories have coevolved or have evolved independently \citep{charleston2003recent}. 

As a result of host--parasite systems' long association with the field of cophylogenetics, coevolutionary systems often describe the independent and dependent phylogenies as the \emph{host} $(H)$ and \emph{parasite} $(P)$ respectively. Cophylogenetic analysis, however, can be applied to all forms of coevolutionary dependence including: biogeography \citep{toit2013biogeography}, host--pathogen systems \citep{mu2005host}, genes and the species that house them \citep{page1997gene}, plant--insect interactions \citep{gomez2010neotropical}, plant--fungi dynamics \citep{refregier2008cophylogeny}, host--parasitoid relationships \citep{stireman2005evolution}, and Batesian and M\"{u}llerian mimicry between species \citep{ceccarelli2007dynamics,cuthill2012phylogenetic}.

The coevolutionary interactions between $P$ and $H$ are represented by the associations $(\varphi)$ between their leaves, based on evidence of parasites inhabiting or infecting their host(s). These associations can be used to infer the level of host specificity of the parasite species with respect to its host(s) \citep{poulin2011evolutionary}. Within this context, high host specificity is the case where a particular parasite infects a single host species, while low host specificity is the case where a parasite may infect many host species.

Coevolutionary analysis of systems with high host specificity focuses on the reconstruction of the parasite's evolutionary history with respect to the host, which is known as a \emph{cophylogeny mapping}. When recovering a map $(\Phi)$ using cophylogeny mapping, the aim is to recover the most congruent solution with a minimum total event cost, while ensuring the associations are conserved using the four known coevolutionary events, \emph{codivergence}, \emph{duplication}, \emph{host switch} and \emph{loss} \citep{ronquist1995reconstructing}. 

A codivergence event is a concurrent divergence of both the host and parasite lineages. A high concentration of codivergence events leads to an increase in the level of congruence between $P$ and $H$, and is therefore a strong indicator of coevolution \citep{page2002tangled}. A duplication event is an independent divergence of the parasite where both new lineages continue to track the host \citep{tuller2010reconstructing}. A host switch event is an independent divergence of the parasite where one parasite shifts from the initial host lineage (take--off edge) to a new lineage (landing edge) in the host, while the second parasite continues to track the host \citep{kim1985coevolution}. We call these three events \emph{divergence events}, as they consider all cases of divergence in the parasite's coevolutionary history. By contrast, loss arises from three indistinguishable processes: lineage sorting (or ``missing the boat''), extinction, or sampling failure. As these processes all produce the same effect we represent these as a \emph{loss} event \citep{paterson2003drowning}. We refer to the problem of reconstructing a map using only these four events as the \emph{restricted cophylogeny reconstruction problem}.

Methods for recovering maps have mainly focused on the restricted cophylogeny reconstruction problem. This is due in part to the initial set of biological events being unable to reconstruct the evolutionary history of parasites with low host specificity, along with the hypothesis that coevolution only occurs in systems with a one-to-one association between parasites and their hosts \citep{poulin2011evolutionary}. This hypothesis, however, only considers a select set of coevolving systems and precludes many observed coevolutionary systems where the parasites maintain low host specificity as an evolutionary advantage. In a comprehensive study of plant--insect interactions, \citeauthor{nosil2005testing} \citeyearpar{nosil2005testing} demonstrated that while insects often form exclusive associations with their hosts, this is not always the case. Butterflies and bark beetles were shown to often be associated with many host plant species. This case is not unusual, with \citeauthor{stireman2005evolution}'s \citeyearpar{stireman2005evolution} study of endoparasitoids and their \emph{Tachinidae} (fly) hosts also demonstrating the evolutionary advantage of low host specificity. These results affirm that ongoing cophylogeny mapping modelling must consider the general case where parasites are permitted to inhabit more than one host (\emph{widespread parasitism}), to accurately model all coevolutionary interrelationships.

As described above, modelling widespread parasitism using cophylogeny mapping requires additional biological events beyond the original four events derived by \citeauthor{ronquist1995reconstructing} \citeyearpar{ronquist1995reconstructing}. Currently, failure-to-diverge is the only event which has successfully been applied to handle widespread parasitic events within a cophylogeny mapping framework. It is defined as the case where parasites maintain their ability to inhabit both hosts following a host divergence event, without the need for a divergence of the parasite lineage \citep{johnson2003parasites}. Failure-to-diverge is the case where there is an interruption of the gene-flow between the host species while there remains gene flow within the parasite population \citep{poulin2011evolutionary}. A case where failure-to-diverge and the full set of divergence events are required to reconstruct a cophylogenetic history can be seen in Figure~\ref{fig:TanglegramAndMap}. We will here refer to events which are used to describe widespread parasite coevolution, such as failure-to-diverge, as \emph{widespread events}. This is to differentiate such events from divergence and loss events.

The failure-to-diverge event allows for the recovery of solutions for all conceivable cases of widespread parasitism for cophylogenetic reconstructions. However, these solutions may have a high number of loss events when widespread parasites inhabit phylogenetically distant leaves in $H$. This is due to the limitation that a failure-to-diverge event occurs at the most recent common ancestor of the pair of inhabited host leaves \citep{banks2005multi}, after which many loss events must be inferred to account for the observed parasite distribution.

Cophylogenetic reconstructions are evaluated using an event cost, similar to that of a parsimony score in phylogenetic reconstructions \citep{charleston2002principles}. Reconstructing the minimum cost map requires that each divergence event, widespread event, and loss event be assigned a penalty cost. The set of costs for each event may be defined as a vector $V = (C,D,W,L,F)$ where $C$, $D$, $W$, $L$, $F$ represent the associative costs for each codivergence, duplication, host switch, loss, and failure-to-diverge respectively. The resultant map cost $E$ can then be derived as:

\begin{equation}\label{eq:costFunction}
E = \alpha C + \beta D + \gamma W + \delta L + \epsilon F
\end{equation}

\noindent where $\alpha$, $\beta$, $\gamma$, $\delta$, $\epsilon$ represent the number of events for codivergence, duplication, host switch, loss, and failure-to-diverge respectively \citep{drinkwater2014improved}.

Cophylogeny mapping algorithms aim to map $P$ into $H$, where the number of codivergence events is maximised and the map cost $E$ is minimised \citep{charleston2014event}. Although such cost schemes may not evaluate coevolutionary scenarios exactly, particularly modelling preferential host switching \citep{charleston2002preferential}, this technique has been used to evaluate a large number of coevolutionary systems \citep{page2002tangled,page2004phylogeny,jackson2004cophylogenetic,cruaud2012extreme,rivera2015lineage}.

Parsimony and event-based methodologies are often seen as less preferable to maximum likelihood methodologies. This is due to parsimony methods often relying on arbitrarily chosen cost schemes. While such a reliance is a limiting factor, parsimony and event based methods may be used to reconstruct the most likely evolutionary history by assigning the negative log likelihood probabilities for each evolutionary event as the associated penalty cost for each event \citep{drinkwater2016rascal}. As a result there is a strong driver for fast mapping methods which can then be integrated into a coevolutionary likelihood framework \citep{charleston2003recent}, to complement maximum likelihood techniques \citep{baudet2015cophylogeny}.

Unfortunately recovering the minimum cost map is known to be NP-Hard \citep{ovadia2011cophylogeny}. This computational intractability is due to the exponential number of host switch locations that can arise due to the variable order of the internal nodes in the host tree \citep{doyon2011efficient}, and the exponential number of internal node orderings \citep{conow2010jane}.

To mitigate this computational intractability two unique heuristics have been proposed. The first approach ignores the relative ordering of the internal nodes in the parasite phylogeny, This may lead to the order of evolutionary events, as defined by a reconciled map, contradicting the order of evolutionary events as defined by the parasite phylogeny \citep{doyon2011efficient}. Such a map is often referred to as \emph{biologically infeasible} or time-inconsistent \citep{doyon2011models}. This approach has been applied in various methods \citep{merkle2005reconstruction,merkle2010parameter,yodpinyaneehmc}, with the fastest known algorithm to date running in $O(n^2)$ \citep{bansal2012efficient}.

To guarantee that solutions are time-consistent two properties must be ensured. A host switch's take--off and landing edges must lie in the same time interval, which is an overlapping interval based on the edges' distances from the root of $H$, and must maintain the partial ordering of $P$ \citep{conow2010jane}, as mentioned above. This requires that the relative order of the parasite phylogeny must be fixed, which has led to the second heuristic which fixes the internal node ordering of the host phylogeny. If the internal node ordering is fixed it is possible to solve the cophylogeny reconstruction problem in polynomial time. This simplified problem, often referred to as the dated tree reconciliation problem \citep{drinkwater2015subquadratic}, has been applied within a number of algorithms \citep{doyon2011models,conow2010jane,drinkwater2015time}, with the fastest proposed to date running in $O(n^2\log{n})$ \citep{bansal2012efficient}.

While the aim of the cophylogeny reconstruction problem is to recover the minimum cost map in terms of, $E$, it is often valuable to infer the significance of the resultant map. In particular it is valuable to identify if a resultant map provides a statistically significant signal that the apparent congruence is unlikely to have occurred simply by chance. Previously, analysis has applied a series of Bernoulli trials to analyse the significance of an inferred map, such as the analysis of pocket--gophers and their parasitic chewing lice by \citeauthor{page1994maps} in \citeyear{page1994maps}. This is also the premise of the statistical evaluation tool Parafit \citep{legendre2002statistical} which randomises the associations or the parasite tree to identify whether the degree of congruence noted between the host and parasite tree could have occurred simply by chance. This process can be replicated when using cophylogeny mapping, by producing randomised permutations of the initial tanglegram (either randomising the associations or the parasite tree), and computing the cost of the optimal map for this randomised instance \citep{page1994maps,page1994parallel}. If the initial tanglegram has a mapping cost, $E$, which is less than the randomised permutations in at least 95\% of cases, then we may reject the null hypothesis that there is no significance between the independent and dependent phylogenetic trees. This feature is currently integrated into the most recent implementation of the Jane software tool \citep{conow2010jane}.

Three recent software tools which have been designed to recover maps where widespread parasites are considered and using the failure-to-diverge widespread evolutionary event, are CoRe-PA \citep{merkle2010parameter}, Jane \citep{conow2010jane} and CoRe-ILP \citep{wiesekecophylogenetic}. CoRe-PA solves the cophylogeny reconstruction problem in polynomial time by relaxing the internal node ordering of the host tree. This approach, although potentially recovering solutions in quadratic time \citep{yodpinyaneehmc}, may recover solutions which are time-inconsistent \citep{doyon2011models}. Jane, in contrast, fixes the internal node order in the host tree and solves this instance using dynamic programming \citep{libeskind2009computational}. This approach guarantees that solutions are biologically feasible. As there are an exponential number of possible fixed node orderings, most techniques applying this approach leverage a genetic algorithm to recover the best possible solutions in a fixed period of time. Finally, CoRe-ILP applies an integer linear programming algorithm to solve the problem of maximising the total number of codivergence events within the reconciled map, which its developers have shown provides a robust estimation of the harder problem of finding the minimum cost map \citep{wiesekecophylogenetic}. 

While each approach handles the computational intractability of recovering the divergence events in significantly different ways, all apply a common approach for recovering widespread events, where each failure-to-diverge event occurs at the most recent common ancestor of the parasite's host, guaranteeing solutions can be recovered in all cases. This approach while offering researchers the first set of tools for inferring coevolutionary systems which include widespread parasites, often infers maps with a high number of loss events polluting the coevolutionary signal. 

Our work aims to expand on this research by constructing a new methodology which solves the cophylogeny reconstruction problem with widespread parasites, the \emph{widespread parasite problem}, which is able to overcome the high costs that are often associated with failure-to-diverge. As a result, this method will decrease the overall parsimony score, while potentially increasing the number of codivergence events within the reconciled map.

\section{Methodology}

\subsection{Reintroducing an additional biological event for coevolutionary analysis}

This work reintroduces an additional evolutionary event for inferring relationships of coevolutionary systems where widespread parasites are permitted, which we call \emph{spread}. We propose that existing frameworks be updated to include spread as an additional widespread parasite event, to work in conjunction with failure-to-diverge. The inclusion of spread aims to more accurately reconcile the widespread parasites' coevolutionary histories with respect to their hosts, by mitigating the high number of loss events which are associated with reconstructions that exclusively use failure-to-diverge, such as cases where parasites \emph{do not} inhabit closely related hosts.

The spread event was first applied by \citeauthor{brooks1991phylogeny}, \citeyearpar{brooks1991phylogeny}, to reconcile widespread parasites within the \citeauthor{brooks1991phylogeny} Parsimony Analysis framework, and was later proposed by \citeauthor{siddall2003brooks}, \citeyearpar{siddall2003brooks}, as an additional widespread evolutionary event to be integrated within TreeMap \citep{TreeMapWebsite}. In both cases, however, neither of these proposed models have been implemented in part due to the additional computational complexity that their inclusion can give rise to.

The spread event is derived from a number of observed parasitic systems, such as the behaviour of chewing lice which infect their penguins hosts \citep{banks2005cophylogenetic}. It has been observed that a number of lice species switch between their penguin hosts at shared breeding grounds, however, this cannot be modelled using a host switch, as there is no divergence in the parasite lineage, nor can this be considered a failure-to-diverge event as there is no evidience to support that the gene flow has been maintained for the chewing lice species. Rather, the lice species have recently spread to new hosts based on new opportunities that are presented.

This ``spreading'' behaviour has also been observed in lab experiments between nematodes and their \emph{Drosophila} fly hosts \citep{jaenike1998general}. Nematodes' general purpose genotypes allow each individual species to infect a high number of host species, allowing nematodes to infect distantly related \emph{Drosophila} hosts which they would not be expected to encounter in nature. The infection of \emph{Drosophila} does not require any evolutionary changes so this cannot be modelled correctly using a host switch event, nor can it be described using failure-to-diverge as the nematodes have not coexisted with their new hosts, and therefore this phenomenon requires an additional biological event to model this observed behaviour, which spread successfully achieves.

The spread event also complements the theory of low host specificity which asserts species evolve specific mechanisms which allow them to inhabit multiple hosts where the parasites are less vulnerable to the evolutionary changes of a specific host species. Further, spread often provides more parsimonious solutions for widespread parasitism. Consider the coevolutionary system in Figure~\ref{fig:introducingSpread} (left). In the first reconstruction, Figure~\ref{fig:introducingSpread} (center), the parasite has had $O(n)$ opportunities to infect a new host species but failed to do so in all cases. This is highly unlikely compared to the alternate reconstruction using spread, where no loss events occur and the parasite simply infects a new host species based on new opportunities, such as the introduction of an infected host species $(A)$ into host species $(B)$ natural environment. 

The alternate map which uses spread, Figure~\ref{fig:introducingSpread} (right), is significantly more parsimonious for cases where the spread event is assigned a penalty cost similar to that of failure-to-diverge. In fact, spread would need to be assigned a cost $n$ times that of a loss event for the solution to be considered more expensive. Therefore, as this biological event describes observed behaviour in nature and also allows for potentially more parsimonious maps, we argue that spread should be integrated into existing algorithms which aim to infer systems which present widespread parasites. This is in line with previous assertions made by \citeauthor{brooks1991phylogeny}, \citeyearpar{brooks1991phylogeny}, \citeauthor{page1994maps}, \citeyearpar{page1994maps}, and \citeauthor{siddall2003brooks}, \citeyearpar{siddall2003brooks}.

While often producing significantly cheaper solutions, spread may not always be possible as it is reliant on host species collocation to permit the occurrence of a spread event similar to host switch events \citep{clayton2004ecology}. Further research needs to be undertaken on how to model this and complements the existing field of research into preferential host switching \citep{charleston2002preferential,cuthill2013simple}. We, however, do not consider this constraint herein, and assume spread is permissible between all hosts, as a means to present this evolutionary event's value to widespread parasitism analysis.

Formally we define the spread event as a parasite lineage that due to new opportunities infects a new host lineage while maintaining its infection of its current host lineage. This event therefore consists of a shift of a subset of the parasite lineage from the initial host (the take-off edge) to a new host (the landing edge), occurring at some point after the host lineages have diverged. 

This definition is derived from the existing definition of the host switch event with which spread shares a number of common traits. Both events require the internal node ordering of the \emph{host} phylogeny to be fixed, to ensure that the resultant map is time consistent, and both events require that the take-off and landing edges share a common timing interval \citep{conow2010jane}. Spread events, unlike host switch events however, do not consist of a bifurcation, and as a result are not dependent on the internal node ordering of the \emph{parasite} phylogeny, which results in spread being a more generalised version of a host switch event.

With the addition of the spread event we are required to update the cost vector $V$ to include $S$, the cost of a spread event, along with updating the objective function $E$ as follows:

\begin{equation}\label{eq:updatedCostFunction}
E = \alpha C + \beta D + \gamma W + \delta L + \epsilon F + \zeta S
\end{equation}

\noindent where $\zeta$ represents the number of spread events in the resultant map, $\Phi$. It is important to note that even with the addition of spread as an additional evolutionary event, the total number of widespread events in Equation~\eqref{eq:updatedCostFunction} $(\epsilon + \zeta)$ is equal to the number of failure-to-diverge events in Equation~\eqref{eq:costFunction} $(\epsilon)$.

Using this new formulation of the objective function $E$, we derive a polynomially bounded algorithm to solve the cophylogeny reconstruction problem where widespread parasites are permitted (the widespread parasites problem), where the internal nodes in the host phylogeny are fixed. The proposed method extends the Improved Node Mapping algorithm \citep{drinkwater2014improved,drinkwater2015subquadratic}, to recover solutions to the widespread parasite problem using both spread and failure-to-diverge. The described methodology, however, is designed so that it can be integrated into other mapping algorithms which leverage a fixed internal node ordering, such as Edge Mapping \citep{yodpinyaneehmc} and Slicing \citep{doyon2011efficient}. This method is then integrated into an existing metaheuristic framework similar to that implemented in Jane \citep{conow2010jane}, which allows for this method to provide robust estimations for the widespread parasites problem in a reasonable period of time.

\subsection{The order of evolutionary events}

Along with integrating both spread and failure-to-diverge within a common framework, our model aims to provide additional flexibility when inferring the position of a widespread event within the reconciled map. Current state of the art algorithms such as Edge Mapping applied in Jane, provide strict bounds on the position where a failure-to-diverge event may occur. These bounds only allow for a subset of the total number of mapping locations to be considered prior to the widespread event. For example consider the tanglegram in Figure~\ref{fig:caseWhereJaneFails} (left) which includes a single widespread parasite. The minimum cost map inferred by Jane, Figure~\ref{fig:caseWhereJaneFails} (right), for this specific instance includes 2 failure-to-diverge events, 1 host switch event and 1 loss event. 

There is an alternate reconstruction for this system, however, where the minimum cost map contains 2 failure-to-diverge events and 1 codivergence event, Figure~\ref{fig:caseWhereJaneFails} (centre). Under all previously published cost schemes this map is considered more parsimonious. Jane is unable to reconstruct this specific map, however, as its algorithm enforces constraints on the number of locations where divergence events may be placed following a set of widespread events. This bound is appropriate as it does allow for a faster running time, however, this bound in cases such as this may give rise to reconciliations which are less parsimonious.

As computational power continues to become faster and cheaper, it is important to consider alternate algorithms which, while potentially less efficient, may provide more parsimonious solutions to the widespread parasite problem. This is the concept which is explored herein, where our proposed framework permits divergence events to occur at all feasible positions prior to and following a set of widespread events. This will increase the asymptotic complexity relative to Jane, in the hope of providing a more parsimonious reconciliation for the resultant maps.

We will show that by increasing the asymptotic complexity by a factor of $n$ that it is possible to provide a solution to the widespread parasite problem which considers both failure-to-diverge and spread, and provides a significant accuracy improvement which is representative of one of the largest single improvements offered by a coevolutionary analysis technique since \citeauthor{charleston1998jungles} \citeyearpar{charleston1998jungles} proposed the Jungle data structure.

\subsection{Integrating Widespread Events into Improved Node Mapping}

In this section we introduce a series of amendments which when applied to the Improved Node Mapping algorithm allows for both failure-to-diverge and spread events to be recovered optimally when reconciling a pair of phylogenetic trees. Prior implementations of node mapping by \citeauthor{libeskind2009computational} \citeyearpar{libeskind2009computational}, and \citeauthor{drinkwater2014improved} \citeyearpar{drinkwater2014improved,drinkwater2015time} have only considered the case where a parasite may inhabit a single host. The amendment described herein not only updates the Improved Node Mapping algorithm to support widespread events, but also resolves the problems associated with algorithms such as Jane which were discussed in the previous section.

The updated version of the Improved Node Mapping algorithm which we will refer to as WiSPA (WideSpread Parasitism Analyser) can be more easily described as a two step process. The first reconciles all optimal widespread events based on an event costs vector, $V$. This is a reconciliation step which recovers all feasible widespread events, where the second step recovers the optimal set of divergence events using the previously derived set of widespread events. By handling these two complex sets of operations in series it is possible to ensure that a polynomially bound algorithm may be derived for solving the widespread parasite problem, where the internal node ordering of the host phylogeny is fixed. 

Our proposed algorithm reconciles the set of optimal widespread events by constructing a set of widespread association trees, a process which is derived from an earlier method proposed by \citeauthor{page1994parallel} \citeyearpar{page1994parallel}. These association trees are then leveraged to recover the optimal set of widespread events, mirroring much of the work proposed by both \citeauthor{page1994parallel} \citeyearpar{page1994parallel} and \citeauthor{siddall2003brooks} \citeyearpar{siddall2003brooks}. Unlike their previous attempts to solve the widespread parasitism problem which applied a greedy algorithm, our approach applies a dynamic programming algorithm to ensure that all feasible states may be considered, avoiding the potential problems that may arise due to local minima or excluding large subsets of the problem space.

\subsection{Reconstructing Widespread Associations as Trees}

To reconcile the set of widespread events for each widespread parasite $(p_i)$, we propose a method which translates the set of widespread associations for the parasite node $p_i$ to a bifurcating tree. A similar model was first used by \citeauthor{page1994parallel} in \citeyear{page1994parallel} to reconcile the widespread parasitism identified in the pocket gopher chewing lice coevolutionary system introduced by \citeauthor{hafner1988phylogenetic}, \citeyearpar{hafner1988phylogenetic}.

The constructed trees which are referred to herein as Association Trees ($a_i$), are a set of trees $A = (a_1 \dots a_n)$, which may be used to infer the optimal set of widespread events where we prove that:

\newtheorem{myLemma}{Lemma}
\begin{myLemma}
An association tree $(a_i)$ is a bifurcating tree constructed based on the associations, $\varphi$, present for the parasite leaf node $p_i$ which mirrors the topology of $H$, such that $a_i$ may infer the maximum number of widespread events.
\label{lemma:listOfCurrentAndFutureCherries}
\end{myLemma}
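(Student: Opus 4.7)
The plan is to prove that the association tree $a_i$, built to mirror the topology of $H$ restricted to the hosts in $\varphi(p_i)$, admits the maximum number of widespread events (failure-to-diverge and spread combined) that can be inferred for the parasite leaf $p_i$.

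First I would fix the construction formally. Given $\varphi(p_i) = \{h_{i_1}, \ldots, h_{i_k}\}$, let $a_i$ be the minimal subtree of $H$ spanning these $k$ host leaves, with all degree-two internal nodes suppressed so that the result is a bifurcating tree. This gives a tree with $k$ leaves and exactly $k-1$ internal nodes, each corresponding bijectively to an internal node of $H$ at which two lineages containing distinct members of $\varphi(p_i)$ diverge.

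Next I would match internal nodes of $a_i$ to widespread-event slots. At every internal node $u$ of $a_i$, the parasite $p_i$ must persist along both descendant subtrees, which is realisable either as a failure-to-diverge at the corresponding node of $H$ or as a spread event placed at an admissible later time on one of the two descendant edges. Thus $a_i$ admits exactly $k-1$ potential widespread-event slots, one per internal node.

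Finally I would argue optimality in two directions. For the lower bound, an inductive counting argument shows that reconciling $p_i$'s presence at $k$ distinct host leaves as a single parasite lineage requires at least $k-1$ merging operations, each of which is necessarily a widespread event; hence no reconstruction can use fewer. For the upper bound, I would show that any bifurcating topology on the $k$ hosts that does \emph{not} mirror $H$ would place at least one internal node at an ancestor that is not realised in $H$, forcing that slot to be resolved by a divergence followed by loss events rather than by a widespread event, and so strictly reducing the number of widespread events the tree can infer.

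The main obstacle will be the upper bound: I need to rule out the possibility that an alternative (non-mirroring) topology could somehow admit more than $k-1$ widespread slots by combining failure-to-diverge and spread cleverly. The crux is showing that time-consistency with respect to the fixed internal node ordering of $H$ forces any widespread event's placement to correspond to an internal node of $H$ on the path between two associated hosts, so that no bifurcating arrangement of the $k$ leaves can expose more than the $k-1$ such nodes already present in $a_i$.
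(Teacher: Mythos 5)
Your slot-counting and lower bound are fine (and your count of $k-1$ internal nodes for $k$ associated hosts is actually more careful than the paper's statement of ``$k$'' events), but your upper-bound argument contains a genuine error that the paper's own proof is specifically built to avoid. You claim that an association-tree topology that does not mirror $H$ would place an internal node at an ancestor not realised in $H$, ``forcing that slot to be resolved by a divergence followed by loss events rather than by a widespread event.'' That is not true under the model: a spread event is the widespread analogue of a host switch, and because every host leaf exists in the present, any two host leaves share a common timing interval, so a spread can be placed between \emph{any} pair of inhabited hosts regardless of how the association tree groups them. Hence every topology on the $k$ hosts already admits all $k-1$ slots as widespread events via spread; a non-mirroring topology does not reduce the count of widespread events, it only removes the option of realising a given slot as a failure-to-diverge without accompanying losses.

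The paper's argument turns exactly on this observation: since spreads are universally recoverable for any tree, the quantity that actually distinguishes topologies is the number of slots that can \emph{also} be realised as failure-to-diverge events, and a tree mirroring $H$ (Fahrenholz's Rule) is the one in which every internal node corresponds to an internal node of $H$, so every slot admits both options --- this is what ``maximum number of widespread events'' means here. To repair your proof you should replace the divergence-plus-loss claim with this two-part structure: (i) spread is available at every slot for every topology, and (ii) only the mirrored topology makes every slot additionally realisable as a failure-to-diverge. Your worry about a topology exposing more than $k-1$ slots is not a real obstacle --- a bifurcating tree on $k$ leaves has exactly $k-1$ internal nodes no matter its shape --- but as written your upper bound proves a false statement and so does not establish the lemma.
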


\begin{proof}
Consider the parasite leaf node $p_i$ with $k$ widespread associations. The maximum number of possible widespread events is the case where $k$ failure-to-diverge events may be recovered. This is because for all cases it is possible to recover $k$ spread events for all trees, due to the construction of the host tree, such that all leaves share a common timing interval (the present) \citep{conow2010jane}.  Therefore an association tree, $a_i$, which maximises the number of failure-to-diverge events will maximise the total number of possible widespread events.

A mirrored tree constructed in line with \citeauthor{fahrenholz1913ectoparasiten}'s \citeyearpar{fahrenholz1913ectoparasiten} Rule will always permit $k$ failure-to-diverge events, as each internal node in the mirrored tree corresponds to an internal node in the host tree \citep{fahrenholz1913ectoparasiten,paterson2001analytical}. Therefore if we construct $a_i$ for $p_i$ which mirrors $H$ based on the associations $\varphi$ ,then we will maximise the number of possible widespread events which are able to be recovered using the association tree.
\end{proof}

By maximising the number of possible events recovered, we ensure that the optimal set of widespread events may be inferred. This is due to the order of widespread events being unbounded, as widespread events are not dependent on the internal order of $P$. Therefore, this approach while ensuring that an optimal set of widespread events is recovered, does not guarantee that the order of events inferred is correct, as there is no information in the initial problem instance to provide such an inference. Further, information about the problem instance would be required to infer the order of widespread events, such as the geographical history of both host and parasite. This, along with the consideration of preferential spread events, is a topic to be considered in later revisions of the WISPA algorithm.

In order to construct the association trees in line with \citeauthor{fahrenholz1913ectoparasiten}'s \citeyearpar{fahrenholz1913ectoparasiten} Rule, we find the unique subtree where each leaf in the association tree is associated with one of the initial widespread associations. The recovery of an association tree can therefore be reduced to the problem of recovering the homeomorphic subgraph of $H$ for the leaves inhabited by the widespread parasite $p_i$ \citep{lozano2007seeded}. 

To construct the homeomorphic subgraph we apply the pruning algorithm described in detail by \citeauthor{lozano2007seeded} \citeyearpar{lozano2007seeded} which creates a copy of $H$ where only the host leaves inhabited by $p_i$ are retained. This algorithm is applied for each widespread parasite which gives rise to the set $A = (a_1 \dots a_n)$.

The associations trees $A = (a_1 \dots a_n)$ mirror $H$ based on each parasite's widespread associations and therefore each leaf in the association tree $a_i$ has a one-to-one association with a leaf in $H$, such that each leaf node in the association tree $a_i$ maps to a unique leaf node in $H$.  This property is not one that is imposed on a standard tanglegram, but is an important property that we leverage to reconstruct widespread events (see next section).

\subsection{Recovering Widespread Events}

The widespread events considered herein are derived from existing divergence events, and therefore existing techniques for the recovery of divergence events may be applied to their recovery. This approach while used by \citeauthor{page1994parallel} \citeyearpar{page1994parallel} to infer failure-to-diverge event,s has not been applied to reconcile multiple widespread parasites within a single common framework. To achieve this each widespread event is considered as the divergence event which most closely matches its behaviour. Under this constraint a failure-to-diverge is recovered from an association tree as a codivergence, and a spread is recovered from an association tree as a host switch.

This is possible as both the optimal codivergence and failure-to-diverge events occur at the most recent common ancestor of their children \citep{johnson2003parasites}, while the optimal host switch and spread events may be recovered using an implementation of the level ancestor problem \citep{drinkwater2014improved}. This is possible as each widespread event mirrors these two divergence events, with the exception that neither include a divergence. This is resolved by creating pseudo-divergence events through the construction of the association trees in line with \citeauthor{siddall2003brooks}'s, \citeyearpar{siddall2003brooks}, proposed reconciliation model.  

Therefore as both widespread events can be inferred from existing divergence events, we may apply existing solutions to the dated tree reconciliation problem, as a means to recover the  optimal divergence events for the set of association trees, $A$. This may in-turn be leveraged to infer the optimal set of widespread events for each association tree, $a_i$. This is possible as association trees are constructed with a one-to-one mapping, which mitigates the need for duplication events, if host switch events are permitted. This is important as there is no widespread equivalent for a duplication event. Exploiting this imposed property of each association tree, we can reconstruct the map for each association tree where only codivergence, host switch and loss events are permitted; that is running the existing Improved Node Mapping algorithm with a cost vector of $(F, \infty, S, L)$, where the costs for failure-to-diverge $(F)$ and spread $(S)$ replace the costs for codivergence and host switch respectively.

The widespread events are inferred from the recovered mappings by relabelling each codivergence as a failure-to-diverge and each host switch as a spread. This process requires that each divergence event in the resultant dynamic programming table generated by the Improved Node Mapping algorithm may be replaced with its corresponding widespread event. The inferred widespread events are then retained within a dynamic programming table $d_i$, which contains all the optimal widespread events for the parasite $p_i$. Therefore the result of mapping the complete set of association trees $A$ into $H$ gives rise to a set of dynamic programming tables $\omega = (d_1, \dots d_n)$, containing all the optimal widespread events for the parasite tree $P$. 

The ReconcileWidespreadParasite algorithm applied to infer the complete set of optimal widespread events for a parasite tree $P$ with respect to its host $H$ is defined in Figure~\ref{fig:reconcile}. This process outlines a new approach to reconciling the incongruence caused by widespread parasitism. It integrates a number of existing approaches proposed by \citeauthor{page1994parallel} \citeyearpar{page1994parallel}, \citeauthor{siddall2003brooks} \citeyearpar{siddall2003brooks}, and \citeauthor{brooks1991phylogeny} \citeyearpar{brooks1991phylogeny}, along with integrating the works of \citeauthor{banks2005multi} \citeyearpar{banks2005multi}, and \citeauthor{johnson2003parasites} \citeyearpar{johnson2003parasites} into a single reconciliation methodology within the context of dated trees. This in turn provides the foundations to infer the optimal set of divergence events, which is described in detail in the following section.

\subsection{Recovering Divergence Events}

The recovery of the divergence events using WiSPA is derived from traditional bottom-up (taxa-to-root) dynamic programming approaches applied in the Slicing \citep{doyon2011efficient}, Edge Mapping \citep{yodpinyaneehmc}, and Improved Node Mapping \citep{drinkwater2014improved} algorithms. Each of these existing approaches incrementally constructs their resultant map using a series of sub-solutions, leading to the recovery of an optimal mapping of the parasite phylogeny into its host. 

One such method, the Improved Node Mapping algorithm, is a cubic time solution for the dated tree reconciliation problem. This approach reconciles the incongruence displayed for each parasite node, by reconciling the optimal divergence event based on the set of mapping sites for its children. This requires a nested set of loops so that every mapping site for the left child is compared with every mapping site of the right child. 

The WiSPA algorithm unlike Improved Node Mapping considers multiple optimal locations for each parasite node, rather than a single optimal mapping site which has been the premise of all cubic time solutions to this problem \citep{doyon2011efficient,yodpinyaneehmc,drinkwater2014improved}. In this more complex case, rather than an optimal mapping site for each pair of children, the optimal mapping may occur at any location, with the sub-solution defined by the widespread mapping. Initial analysis may suggests that this additional complexity may induce a further set of quadratic comparisons. This, however, can be mitigated by exploiting a number of topological properties of the underlying dynamic programming table and the topology of the resultant map, both of which are explored within this section.

The first point which should be noted is that the dynamic programming table traditionally only retains a single mapping site for the parasite leaves \citep{drinkwater2015subquadratic}. It is possible, however, to retain multiple mappings for each parasite leaf node, where in fact there is the ability to retain a mapping site for each parasite node to all locations in the host tree, without increasing the asymptotic complexity of the Improved Node Mapping algorithm. Exploiting this property of the dynamic programming table in handling widespread parasites was introduced as a possibility during the formulation of the original Improved Node Mapping algorithm \citep{drinkwater2014improved}. By allowing a set of mappings for each parasite node of this size, allows for the optimal set of mapping sites stored for the root of the association tree $a_i$, corresponding to the parasite node $p_i$ in question, to be retained within the dynamic programming table. This in turn allows for the optimal set of widespread events to be considered within the context of inferring a set of optimal divergence events.  

To handle the additional complexity which arises due to handling multiple widespread parasite events, the Improved Node Mapping algorithm has been split such that it considers three possible scenarios, including the case where the left child is treated as a widespread parasite, the right child us treated as a widespread parasite or neither child is treated as a widespread parasite, as can be seen in Figure~\ref{fig:wISPA}. In the case where the left or right child is treated as a widespread event (lines 17 - 28), the divergence event may be placed at an earlier time period to root of the widespread event (either a failure-to-diverge or a spread event), as long as the relative order of the parasite phylogeny is preserved. That is while a divergence event may be placed prior to multiple widespread events, it may never be placed at a position prior to one of its descendants. Prior in this context refers to a position closer to the present, as the solutions are constructed in reverse, from the tips to the root. To provide this additional degree of flexibility when reconciling the incongruence between the parasite and its host, requires that all positions within the host be considered as a possible mapping site for each pair of points, adding an additional nested loop (on lines 18-21 and 25-27, in the case where the left or right child are widespread respectively). 

Handling widespread parasitism in this fashion results in either a widespread event being the root of a sub-solution, such as a failure-to-diverge event occurring at a time period in the past before any of the divergence events, or a divergence event occurring as the root of a sub-solution. In the latter case this sub-solution from this point onwards is considered as a standard mapping site, in line with previous models, while in the case where the root is a widespread event, its parent too will be required to traverse the complete search space to allocate the optimal divergence event, and therefore an additional layer of computational complexity is added with this approach, discussed in detail in the following section.

The major benefit of this model is that in the case where both the left and right children are widespread parasites, it is possible to abstract away any possible compounding complexity by considering each widespread parasite in series. This reduces the need for an additional increase in the computational complexity of the proposed model, which is achieved by noting that a divergence event may not occur prior to the root of both widespread events, as this would reflect the occurrence of divergence events, and as such one of the two widespread events must be considered as a root, or the divergence event itself may be the root of both lineages. This is in line with the theory considered by \citeauthor{fish2013cophylogeny}, \citeyearpar{fish2013cophylogeny}, in the development of the third version of Jane was the first version to consider widespread parasitism.

In the final case (lines 29 - 32) neither the left or right child are rooted by widespread events. In this case the complexity of widespread parasitism is already fully explained within the sub-solution, or the sub-solution does not contain any widespread events. In either case such a sub-solution is processed in-line with the existing Improved Node Mapping algorithm, and no further changes are required to the algorithm presented in Figure~\ref{fig:wISPA} to handle this case.

Therefore by reconciling the optimal set of divergence events based on the optimal set of widespread evolutionary events retained within $\omega$ it is possible to handle multiple widespread evolutionary events and to overcome the limitations identified within the algorithm applied by Jane. In the following section the asymptotic complexity of the algorithm is discussed, where we prove that the additional accuracy provided by the model is achieved by adding only a $O(n)$ increase in the complexity of the Improved Node Mapping algorithm, resulting in a complexity which is comparable to software tools such as Costscape and Eventscape \citep{libeskind2014pareto} and significantly faster than Jane 1 \citep{conow2010jane} and the Jungle method \citep{charleston1998jungles,TreeMapWebsite}, all of which are popular co-evolutionary analysis methods. 

\subsection{Complexity Analysis}

The WiSPA algorithm is designed using a series of underlying algorithms to provide the most accurate algorithm for handling widespread parasites. In this section we analyse the associated computational complexity of this approach, and how this compares to existing algorithms applied within the field of coevolutionary analysis of widespread parasites.

For the complexity analysis considered herein we consider the number of nodes in the host tree to be $2n-1$. That is that the host tree contains $n$ leaves and $n-1$ internal nodes. The parasite tree conversely contains $2m-1$ nodes, where the parasite tree contains $m$ leaves and $m-1$ internal nodes. Finally the maximum number of associations for an individual widespread parasite is considered as $k$ where $k \leq n$. That is no single parasite may have more associations then there are unique host leaves to infect.

The WiSPA algorithm is composed of two computationally expensive steps. The first is the processing required to handle the parasites which inhabit more than one host, specifically constructing and solving the association trees (lines 7-12 in Figure~\ref{fig:wISPA}), and the second step is processing the divergence events, the internal nodes in the parasite tree (lines 14-34 in Figure~\ref{fig:wISPA}).

Processing the leaves in the parasite tree requires the construction of $O(m)$ association trees which are of size $O(k)$. The association trees are constructed using an application of \citeauthor{lozano2007seeded}'s \citeyearpar{lozano2007seeded} homeomorphic subgraph pruning algorithm, which runs in $O(kn)$ for each of the $O(m)$ widespread parasites. Therefore the time required to construct the set of association trees, $A$, is $O(kmn)$. The solutions for each of these association trees are stored with an array of dynamic programming tables, where each table is of size $O(nk)$, where the array of dynamic programming tables $\omega$ contains $O(m)$ elements. Therefore the space requirement for the step is $O(kmn)$. Solving each of the association trees requires $O(kn^2)$ time, and therefore as $O(m)$ trees need to be solved the total running time of this step is $O(kmn^2)$.

Reconciling the divergence events (lines 14-34 in Figure~\ref{fig:wISPA}) requires mapping the parasite into the host using the additional information retained within the list of dynamic programming tables, $\omega$. As the additional widespread information is retained within $\omega$ no additional space is required compared to the original dynamic programming table construction defined by \citeauthor{drinkwater2014improved} \citeyearpar{drinkwater2014improved}, and therefore the space required is $O(mn)$. The running time however requires an additional step which involves iterating over all the possible widespread locations of which there may be $O(k)$ for each mapping site considered, and therefore the running time is extended from $O(mn^2)$, as defined within the original implementation of the Improved Node Mapping algorithm, to $O(kmn^2)$.

This time and space complexity is quite significant considering that the complexity of the proposed algorithm grows linearly in regards to the number of additional widespread associations which are added to the tanglegram. That is while the Improved Node Mapping algorithm runs in cubic time when considering only $O(n)$ associations, our proposed algorithm runs in quartic time when considering $O(n^2)$ associations. Therefore in the case where only one additional widespread association is added to each parasite, the total running time only increases by a factor of two. This is significant as the number of widespread associations for each parasite will never be of size $O(n)$ under any realistic biological scenario. For example, if we consider the 15 previously published biological data sets introduced later to validate our model, it may be observed that on average the rate of widespread parasitism is approximately 7\%, which compared to the size of the data sets is less than $\log{n}$, which argues that while the worst case running time for the proposed algorithm is quartic, the actual running time in practice is actually more comparable to existing cubic time algorithms.

\subsection{Implementation and Validation}

The algorithm proposed herein is implemented in Java and is available as a platform-independent jar file. The underlying algorithm is integrated into a genetic algorithm, which is designed to run in a multithreaded environment, similar to the design proposed by \citeauthor{conow2010jane} \citeyearpar{conow2010jane}. The advantage of \citeauthor{conow2010jane}'s \citeyearpar{conow2010jane} model is the near-linear speedup possible using multi-core systems.

Jane 4 \citep{conow2010jane} was selected as the algorithm to validate the theoretical model presented herein. Jane is the best candidate to evaluate the performance of WiSPA as both methods are designed to minimise the total cost of all evolutionary events considered, and that they both leverage an underlying algorithm to solve the dated tree reconciliation problem as a means to inform their metaheuristic framework. CoRe-PA and CoRe-ILP were not considered, as for the size of the data sets considered herein Jane has been shown to outperform both these techniques \citep{conow2010jane,wiesekecophylogenetic}. 

The evaluation of our new model is broken into two parts. The first considers Jane and WiSPA's accuracy over 500 synthetic data sets which display varying degrees of widespread parasitism. Then Jane and WiSPA are evaluated over 15 previously published biological systems. In both evaluations two key metrics are considered. The first is the total cost of the reconciliation inferred by each model, and the second is the total number of codivergence events present in the inferred reconciliation. Each of these two values represent the degree of congruence represented by the reconciled map, where the aim for coevolutionary analysis is to infer the minimum cost map with the maximum number of codivergence events \citep{littlewood2003evolution}. Therefore each model will be validated on how well they conform to this criteria. These two key metrics align with prior analysis of coevolutionary techniques \citep{page1994maps,page2002tangled,ronquist1998three,conow2010jane,wiesekecophylogenetic}, and are considered the best two signals for recovering a biologically relevant map which most accurately represents the actual coevolutionary interactions.

Along with demonstrating the effectiveness of the generalised model applied within WiSPA, this analysis also aims to infer the significance of the inclusion of the spread evolutionary event. This was achieved by considering three different costs for the evolutionary event spread; a cost of one, which is equal to the cost of a failure-to-diverge event, a cost of two, the same cost as a host switch event the evolutionary event which is most similar to the spread event, and finally the case where a spread event is not permitted (in essence assigned a cost of $\infty$).

Each of these values for the spread event are integrated into the Jungle cost scheme \citep{ronquist2003parsimony} to provide three unique event cost schemes for this analysis, including $V=(0,1,2,1,1,1)$, $V=(0,1,2,1,1,2)$, and $V=(0,1,2,1,1,\infty)$. When evaluating the performance of Jane using these cost vectors the recovered map will always have the same cost, as varying the cost of spread has no bearing on the cost of the recovered map by Jane.

\section{Discussion and Analysis}

The analysis performed herein using a combination of synthetic and biological data sets will demonstrate that WiSPA is able to converge on maps with a lower event cost, with a significantly higher number of codivergence events. The significance of this result is that even in the case where spread is not permitted, WiSPA is observed to perform 2\% better in practice. This is shown to only improve as spread is permitted, and its associated penalty cost is reduced.

Following this successful result we continue our analysis of WiSPA and Jane by considering the \emph{Primate}--\emph{Enterobius} biological data sets in further detail. This biological system has long been considered a likely coevolutionary system, however, the inability of prior models to handle the widespread parasitism resulted in no previous model providing a statistically significant coevolutionary hypothesis for the sub-clade considered herein. We demonstrate that while Jane may be unable to provide such a hypothesis, for certain values of spread, WiSPA is able to provide a statistically significant coevolutionary hypothesis for this system.

\subsection{Overall Performance on Synthetic Data}

The synthetic data sets used to evaluate WISPA were previously constructed using the Cophylogeny Generation Model (Core-Gen) \citep{keller2011evaluation}. These coevolutionary histories were constructed using a standard Yule Model, a common synthetic tree generation model applied in phylogenetics \citep{steel2001properties}. Previously \citeauthor{keller2011evaluation} \citeyearpar{keller2011evaluation} constructed 1000 synthetic data sets, where for this evaluation we have randomly selected 50 of these to provide a baseline for this comparison.

As Core--Gen can only generate coevolutionary systems where each parasite infects a single host, the existing data sets needed to be modified to induce widespread parasitism. From each of the 50 synthetic data sets initially selected, nine additional data sets were created by randomly applying additional widespread associations to the initial data sets. These additional nine new data sets present a varied degree of widespread parasitism, with the aim to model a decreasing rate of host specificity.

For the nine data sets we allowed additional widespread events to be added for each parasite, such that the maximum rate of additional widespread parasitism was 10\%, 15\%, 20\%, 25\%, 30\%, 35\%, 40\%, 45\%, and 50\% of the total available host species for each of the nine data sets respectively. This is applied by selecting each parasite node and allowing the parasite to infect a random number of additional host species between $(0$ and $p \times n)$, where $p$ is the rate of widespread parasitism for the specified synthetic system and $n$ is the number of host taxa. It should be noted that this model is a crude representation of widespread parasitism in nature, however, it provides a robust set of synthetic data sets to compare Jane and WiSPA over varying degrees of widespread parasitism.

This technique is also advantageous as it provides a baseline of the number of codivergence events present in the original tanglegram, which can be compared to the recovered number of codivergence events of each technique as the rate of widespread parasitism increases. Therefore both the rate at which the total event cost increases and the total number of codivergence events decreases, may be tracked for the models considered within this analysis.

This is captured in Figure~\ref{fig:resultsForMetaheuristicPlots} where the total event cost (left) and total number of codivergence events (right) are recorded for the ten data sets (including the baseline) for the four models considered. These two plots provide the best insight to date in regards to the benefits of the spread event, particularly in increasing the total number of codivergence events compared to using failure-to-diverge exclusively.

In the case where spread is set to one a reduction of more than 50\% is achieved in the parsimony score, with this reduction only increasing as the rate of widespread parasitism increases. This reduction is complimented by a nine fold increase in the number of codivergence events. A similar trend is observed where spread is set to two. Here a reduction of more than 35\% is achieved in the parsimony score, with this reduction only increasing as the rate of widespread parasitism increases. The reduction in this case is complimented by an eight fold increase in the number of codivergence events. 

In both cases where spread is permitted a significant improvement in the congruence of the reconciled maps is achieved. In the case where spread is not permitted such a drastic improvement is not observed although there is a 1\% decrease in the total parsimony cost compared to Jane with an increase of 2\% in the number of codivergence events. While nowhere near as impressive as the case where spread is permitted, this improvement is important as while it represents our model in the worst case it still shows it outperforms Jane and only improves as the cost of spread is decreased. 

\subsection{Overall Performance on Real Data}

The performance over the synthetic data set demonstrates the value of the generalised model applied within WiSPA along with the advantage of applying the spread event for the analysis of systems presenting widespread parasitism. As noted, however, the model applied to generate the synthetic data sets does not provide the best representation of widespread parasitism within a biological system, although it is the first synthetic model which attempts to model this phenomenon. Therefore our analysis also compares the performance of Jane and WiSPA over biological data sets.

For this analysis 15 biological data sets were selected to compare WiSPA with the latest version of Jane. These biological systems considered 10 biological phenomena, including but not limited to parasitism \citep{hafner1988phylogenetic}, plant--insect interactions \citep{gomez2010neotropical}, coevolutionary dynamics between a virus and its host \citep{jackson2004cophylogenetic}, mutualism \citep{mcleish2012codivergence}, parasitoidism \citep{murray2013ancient}, and plant--fungal coevolution \citep{refregier2008cophylogeny}. The complete list of biological systems included in this analysis and the coevolutionary interrelationships each system expresses has been listed in Table~\ref{tab:realDataSets}. These data sets were selected to evaluate whether spread can assist in recovering more parsimonious reconstructions, along with evaluating the newly proposed model for reconciling widespread parasitism. This analysis along with evaluating the cost of each reconciliation, also considers the number of codivergence events recovered as another means of evaluating the inferred congruence found by each technique considered herein. 

It should be noted that while 15 data sets does not compare to the 500 data sets considered in the previous section, this selection of biological data represents the largest collection of coevolutionary systems displaying widespread parasitism assembled to date. While larger collections exist for the case where parasite only parasites infect a single host such as the 102 data sets catalogued by \citeauthor{drinkwater2014treeCollapse} \citeyearpar{drinkwater2014treeCollapse} this is the first and therefore largest selection of widespread coevolutionary systems aggregated to date.

The results of this comparison are displayed in Table~\ref{tab:realDataResults} and show a significant improvement in both the reconciliation's cost, and the total number of codivergence events when including the spread event in the reconstruction of the parasites' evolutionary history with respect to their host. In all cases where spread was assigned a cost of one, the newly proposed algorithm found a solution that was at least as parsimonious as Jane, with the majority of cases inferring a solution which was significantly more parsimonious and with a higher number of codivergence events. Over the 15 data sets there was an observed reduction of 35\% in the event cost, and an increase of 21\% in the number of codivergence events.

A similar trend was observed in the case where spread was assigned a cost of 2. Here in all cases WiSPA was able to find a solution which was at least as parsimonious in terms of event cost, with a number of cases where WiSPA was able to infer a reconciliation which was significantly more parsimonious and with a higher number of codivergence events. Over the 15 data sets there was an observed reduction of 22\% in the event cost and an increase of 18\% in the number of codivergence events.

This demonstrates the value of the spread event for coevolutionary analysis, where a significant reduction in the total parsimony cost may be achieved using the spread event, with the largest single reduction providing a 55\% decrease in the total event cost. These results match the benefits observed over the synthetic data sets, providing further evidence of the value of adopting the spread event for widespread analysis.

In the case where spread was not permitted, WiSPA was still able to outperform Jane, although the improvement was not as pronounced. Overall there was a 2\% reduction in the total cost of the 15 maps with no difference in the total number of codivergence events inferred across the 15 systems. This is still a significant result as this is the worst case performance of the newly proposed model for reconciling widespread parasitism, and even then we are able to present an improvement of 2\%. While in the case where spread is not permitted many systems perform as well as Jane, there is one particular system which displays a significant improvement. The ant--wasp parasitoid coevolutionary system's cost is reduced by  25\% by allowing a divergence event to occur prior to multiple widespread events. Such a significant reduction is the difference between a map which is only cheaper than 61.19\% of random solutions as in the case of Jane compared to a map which is cheaper than 94.96\% of random solutions as in the case of WiSPA. These results are based on the randomisation test undertaken using Jane, using 10000 random instances.

While in all cases WiSPA was able to recover a map which was equal to or less than that which was recovered by Jane it should be noted that there was a case where Jane was able to outperform WiSPA in terms of inferring a map where the total number of codivergence events was higher. In the RNA Virus example, which has been marked as bold in Table~\ref{tab:realDataResults}, it can be seen that Jane's best reconciliation contains 5 codivergence events while WiSPA only infers a map with 4. In both cases the recovered map has a cost of 15 and as such current significance testing considers these two model equivalent. If subjected to the same randomisation test considered for the ant--wasp parasitoid coevolutionary systems neither map is considered significant, $(p=0.14)$.

The results over the biological data sets provide a compelling case for the adoption of the newly proposed model. Here a reduction in the parsimony score of 22\% is achieved even where spread is set to a cost of two, reducing a further 14\% if spread is assigned a cost of one. In the following section we demonstrate the significance of this improvement by comparing Jane's reconciliation with those of our model along over a long studied biological system of public health significance.

\subsection{Spread provides stronger evidence for \emph{Primate}--\emph{Enterobius} Coevolution}

\emph{Primate} and \emph{Enterobius} (Pinworms) have long been considered as a possible coevolutionary system \citep{cameron1929species,sandosham1950enterobius,sorci1997host,hugot1999primates}. This hypothesis is due to the high degree of congruence which has been observed between these two phylogenetic trees \citep{hugot1999primates}. \citeauthor{brooks1982pinworms} \citeyearpar{brooks1982pinworms} identified that while the observed congruence within this system strongly supported coevolution that there remains a subset of the \emph{Primate} and \emph{Enterobius} tanglegram which did not appear to provide evidence of coevolution. In particular it was noted that the species \emph{E. vermicularies} infection of both \emph{Hylobatidae} (Gibbon) and \emph{Homo sapien} (Human) could not be explained by traditional coevolutionary models.

This failure by cladistic models was due to the inability of coevolutionary analysis to reconcile widespread parasitism. This was later rectified as part of SBPA which \citeauthor{brooks2003extending} \citeyear{brooks2003extending} applied to this system although once again no specific modelling for the relationships between the species \emph{E. vermicularies} which inhabits both humans and gibbons was provided. This unexplained sub-clade has also been considered by cophylogeny models as well as cladistic approaches, with \citeauthor{ronquist1997phylogenetic} \citeyearpar{ronquist1997phylogenetic} proposing that this inconsistency was due to a recent host switch event from gibbons to humans. One weakness with this hypothesis, however, is that it assumes that \emph{E. vermicularies} has diverged during the infection of Humans which current evidence does not support \citep{brooks1982pinworms}. As a result, a complete hypothesis which reconciles the observed data within this potential coevolutionary system and in particular this sub-clade has remained unanswered.

While initial coevolutionary analysis assumed widespread parasitism cannot occur \citep{poulin2011evolutionary} in a coevolutionary context, this has gradually become more accepted as potentially occurring depending on the nature of the coevolutionary system considered. Laboratory experiments have shown that \emph{Enterobius} is a species which displays a low host specificity, with results as early as \citeauthor{sandosham1950enterobius} \citeyearpar{sandosham1950enterobius} noting that a number of species of \emph{Enterobius} infected phylogenetically distant primates held within captivity and which would not associate with one another in the wild due to vast geographical diversity. From this evidence it does not seem infeasible that \emph{E. vermicularies} may also be able to infect multiple host species.

The infection of humans has a higher probability due to humans no longer being bound by their biogeographical environment. This hypothesis agrees with existing coevolutionary analysis focusing on tapeworms, which have been shown to have a low host specificity wereby species were able to infect humans during their dispersal from Africa 2.5 million years ago \citep{hoberg2001out}. 

Therefore, we attempt to provide a coevolutionary explanation applying widespread parasitism to this sub-clade using both the methodologies applied in Jane and WiSPA. We firstly evaluate the two recovered maps from Jane and WiSPA and discuss their inferred set of biological events and their implications. These maps are then evaluated statistically to evaluate if either method rejects the null hypothesis that these two phylogenetic trees are independent from one another. For this analysis we apply the Jungle cost scheme \citep{ronquist2003parsimony} including spread with a cost of both one and two.

To provide a fair statistical analysis we generate all feasible widespread systems which include one additional widespread association between the parasite and its host. In total there are 10000 systems where the host and parasite phylogenies are fixed and the associations are randomised. By generating a single instance of all possible maps, we guarantee that no bias is introduced using different randomisation techniques for each model. 

These models may be generated by computing all possible association pairs of which there are $5^4$ and multiplying this by the total number of unique additional associations that may be applied, which is $(5-1) \times 4$. The minus one is due to the inability to apply more than one association between a single parasite and a single host. Therefore the total number of unique systems that may be generated for the \emph{Primate} and \emph{Enterobius} (Pinworms) system presented in Figure~\ref{fig:primatePinwormsTanglegram} is $10000$ $(5^4 \times(5-1) \times 4)$.

Jane's recovered map for the Jungle cost scheme is visualised in Figure~\ref{fig:primatePinwormMaps} (left). This map consists of two codivergences, one host switch, three loss events and one failure-to-diverge which has a resultant cost of six. This map hypothesises that the species of pinworm which now infects gibbon and human had the potential of infecting all species which humans share a more recent ancestor with than with gibbons but that this species failed to do so on all occasions. This seems highly improbable as gibbon and human's most recent common ancestor is estimated to have lived over 14 million years ago \citep{carbone2014gibbon}.

Further, while the initial parasite phylogeny appears to have a high degree of congruence with its host, this apparent congruence is not well represented in the recovered map. If we compare Jane's recovered map to the inferred cost over the 10000 unique instances, we observe (in Figure~\ref{fig:resultsForBernoulliTrails} (left)) that it is relatively high compared to what would be expected simply by chance. If we evaluate its cost using the Wilson score interval we converge on a confidence interval of $(0.760, 0.795)$. As such the reconciliation argues that the evaluation history of these two sets of species are independent. 

WiSPA's recovered map for the cost scheme $V=(0,1,1,2,1,1)$ and $V=(0,1,1,2,1,2)$ is visualised in Figure~\ref{fig:primatePinwormMaps} (right). In both cases the same map was inferred where the only difference was that the recovered spread event costs more in the latter case. This map consists of three codivergences, one loss event and one spread which has a resultant cost of two or three respectively. This map provides the hypothesis that this system has been coevolving throughout its evolutionary history with all divergence events indicative of coevolution.  This widespread event for \emph{E. vermicularies} in this map is explained using a recent spread event from gibbon to human. As previously discussed spread requires that the hosts are biologically collocated at the time spread occurs. This collocation can be explained in this case as humans are no longer geographically bound and therefore spread's potential is significantly higher than for other geographically bound primates. The loss event in this reconstruction can be explained by integrating citeauthor{ronquist1997phylogenetic}'s \citeyearpar{ronquist1997phylogenetic} prior hypothesis. In particular he noted the introduction of \emph{E. vermicularies} into humans may have caused an extinction of a species of \emph{Enterobius} with a common ancestor of \emph{E. anthropopitheci}. 

If we compare WiSPA's recovered map to the same 10000 unique instances that were used to evaluate Jane's map, it can be seen that there is strong evidence for coevolution in this case, as seen in Figure~\ref{fig:resultsForBernoulliTrails} (center and right). Using the Wilson score interval we converge on a confidence interval for the case where spread is a cost of one and two of $(0.028, 0.045)$ and $(0.044,0.069)$ respectively. In both cases this presents a strong indication of coevolution, considering the size of the tanglegram where even a perfectly congruent tanglegram with four internal nodes in the host and parasite tree (an additional node in the parasite compared to this example) only offers a confidence value of $(0.011, 0.023)$.

Due to these results we argue that the inconsistent sub-clade from the \emph{Primate} / \emph{Enterobius} tanglegram can be explained using widespread parasitism when applying the spread event. In particular we note that the algorithm WiSPA is the only method that is able to recover a widespread solution to this instance and provide a statistically significant signal for coevolution for this evolutionary system. 

\section{Conclusion}

This work presents a new model for reconciling the incongruence that may arise between a pair of phylogenetic trees where parasites are permitted to inhabit more than one host. While this permutation of the cophylogeny reconstruction problem has often been considered to be computationally complex, we provide a polynomial solution in the case where their exists timing information for the host phylogeny. In the case where such timing information is unavailable, we provide a metaheuristic framework which applies our underlying algorithm, which is shown to be the most accurate model for widespread parasitism produced to date.

The accuracy improvement present within our proposed model (WiSPA) is due to its inclusion of an additional widespread evolutionary event, which we refer to as spread, along with it providing a more generalised framework for inferring the optimal set of widespread events. The additional widespread evolutionary event applied herein is derived from a number of previous coevolutionary models, along with observed parasitic behaviour in nature and the laboratory, where the inclusion of the spread event alone has been shown to provide an accuracy improvement of over 55\%. 

The accuracy improvement comes at a cost however, where our model is shown to be an order of magnitude slower then the current state of the art algorithm applied in Jane. While this algorithm is more computationally expensive than algorithms applied in the latest version of Jane \citep{janeWebsite} and CoRe-PA \citep{merkle2010parameter}, our algorithm is still far superior to Jane 1 \citep{conow2010jane} and the Jungle model applied in TreeMap, which have both been applied to successfully analyse a number of coevolutionary systems, and is also asymptotically more effiecent than the tools within the Xscape framework \citep{libeskind2014pareto}, proving that our model is capable of analysing biological data sets.

Finally we applied WiSPA to the well-studied sub-clade of the coevolutionary system of \emph{Primate} and \emph{Enterobius}. Since this sub-clade was identified by \citeauthor{brooks1982pinworms} \citeyearpar{brooks1982pinworms} no satisfactory explanation reconciliation of this sub-clade has been derived. We have shown that while this has eluded prior models, WiSPA is able to provide a statistically significant hypothesis for this sub-clade which complements the existing theory of \emph{Primate} / \emph{Enterobius} coevolution, and also provides a plausible biological model consistent with broader understanding of primate--parasite coevolution. 

This result coupled with the results when comparing WiSPA and Jane over the synthetic and biological data sets considered herein demonstrates the value of our proposed model. Not only does WiSPA provide the flexibility of providing an additional evolutionary event to explain the incongruence caused by widespread taxa but this model also provides further flexibility in reconciling the conflict that may arise when dealing with the order of widespread and divergence events. As such we argue for the adoption of this new model to provide additional insights into the complex problem of reconciling the coevolutionary associations of widespread taxa.
 

\newpage 
 
\bibliographystyle{sysbio}
\bibliography{References}

\newpage

\begin{table}[h]
	\caption{Biological systems considered in this analysis and the type of coevolutionary interrelationship expressed within said system.}
	\begin{center}
		\begin{tabular}{ l L{5.2cm} }
		\hline
		Coevolutionary system & Type of coevolutionary interrelationship expressed \\
		\hline
		\emph{Acacia} / \emph{Pseudomyrmex} \citep{gomez2010neotropical} & Plant--Insect Mutualism \\
		\emph{Aves} / \emph{Syringophilopsis} \citep{hendricks2013cophylogeny} & Bird--Mites Parasitism \\
		\emph{Carex} / \emph{Anthracoidea} \citep{escudero2015phylogenetic} & Plant--Fungi Parasitism \\
		\emph{Caryophyllaceae} / \emph{Microbotryum} \citep{refregier2008cophylogeny} & Plant--Fungi Mutualism \\
		\emph{Cichlidae} / \emph{Platyhelminthes} \citep{mendlova2012monogeneans} & Fish--Flatworm Parasitism \\
		\emph{Formicidae} / \emph{Eucharitidae} \citep{murray2013ancient} & Ant--Wasp Parasitoidism \\
		\emph{Goodeinae} / \emph{Margotrema} \citep{martinez2014historical} & Fish--Flatworm Parasitism \\
		\emph{Ficus} / \emph{Agaonidae} \citep{mcleish2012codivergence} & Plant--Insect Mutualism \\
		\emph{Gastropoda} / \emph{Schistosome} \citep{lockyer2003phylogeny} & Snails--Flatworm Parasitism \\		
		\emph{Geomyidae} / \emph{Mallophaga} \citep{hafner1988phylogenetic} & Rodent--Lice Parasitism \\
		\emph{Mycocepurus smithii} / \emph{Fungi} \citep{kellner2013co} & Ant--Fungal Mutualism \\
		\emph{Ramphastidae} / \emph{Mallophaga} \citep{weckstein2004biogeography} & Bird--Lice Parasitism \\
		\emph{Sigmodontinae} / \emph{Arenaviridae} \citep{jackson2004cophylogenetic} & Rodent--Viral Coevolution \\
		\emph{Teleostei} / \emph{Copepods} \citep{paterson1999have} & Fish--Crustacean Parasitism \\
		\emph{Tephritidae} / \emph{Bacteria} \citep{viale2015pattern} & Fly--Bacteria Symbiosis \\
		\end{tabular}
	\end{center}
	\label{tab:realDataSets}	
\end{table}

\newpage

\begin{table}[h]
	\caption{WiSPA's performance against Jane 4 over fifteen biological test cases. WiSPA has been run with three different costs associated for spread. Spread was set to a cost of 1, 2 and where spread was not permitted in the reconstruction.}
	\begin{center}
		\begin{tabular}{ l r r r r }
		\hline
		Coevolutionary system & \multicolumn{4}{c}{Recovered event cost and (\# of codivergence events)} \\
		 & Jane & Spread = 1 & Spread = 2 & No Spread \\
		\hline
		\emph{Acacia} / \emph{Pseudomyrmex} & 67 (0) & 28 (2) & 43 (2) & 65 (0) \\
		\emph{Aves} / \emph{Syringophilopsis} & 17 (9) & 17 (9) & 17 (9) & 17 (9) \\
		\emph{Carex} / \emph{Anthracoidea} & 73 (9) & 59 (9) & 65(10) & 73 (9) \\
		\emph{Caryophyllaceae} / \emph{Microbotryum} & 33 (3) & 26 (5) & 30 (3) & 33 (3) \\	
		\emph{Cichlidae} / \emph{Platyhelminthes} & 40 (7) & 34 (9) & 39 (7) & 39 (7) \\	
		\emph{Formicidae} / \emph{Eucharitidae} & 12 (0) & 8 (1) & 9 (1) & 9 (1) \\
		\emph{Goodeinae} / \emph{Margotrema} & 36 (2) & 21 (4) & 25 (4) & 33 (2) \\	
		\emph{Ficus} / \emph{Agaonidae} & 10 (3) & 8 (4) & 9 (4) & 10 (3) \\	
		\emph{Gastropoda} / \emph{Schistosome} & 122 (1) & 54 (3) & 77 (2) & 120 (1) \\		
		\emph{Geomyidae} / \emph{Mallophaga} & 9 (6) & 8 (6) & 9 (6) & 9 (6) \\	
		\emph{Mycocepurus smithii} / \emph{Fungi} & 42 (1) & 21 (2) & 28 (3) & 41 (1) \\
		\emph{Ramphastidae} / \emph{Mallophaga} & 17 (2) & 12 (2) & 14 (3) & 17 (2) \\	
		\textbf{\emph{Sigmodontinae} / \emph{Arenaviridae}} & \textbf{15 (5)} & \textbf{15 (4)} & \textbf{15 (4)} & \textbf{15 (4)} \\	
		\emph{Teleostei} / \emph{Copepods}& 4 (1) & 2 (2) & 3 (2) & 4 (1) \\
		\emph{Tephritidae} / \emph{Bacteria} & 29 (12) & 29 (12) & 29 (12) & 29 (12) \\		
		Total & 526 (61) & 342 (74) & 412 (72) & 512 (61) \\
		\end{tabular}
	\end{center}
	\label{tab:realDataResults}	
\end{table}

\newpage


\clearpage

\begin{figure}[H]
\centering

\includegraphics[width=\textwidth]{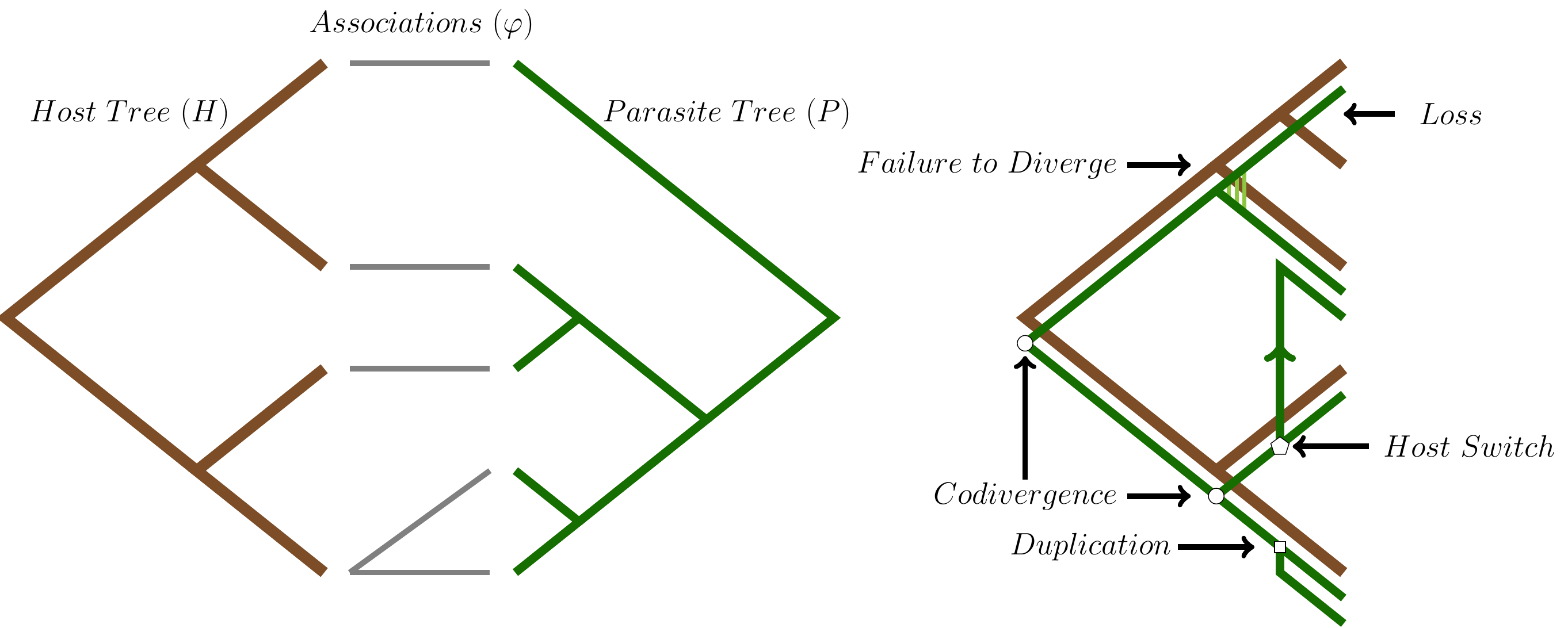}

\caption{A tanglegram (left) and one of its optimal maps (right). What is unique about this possible map, $\Phi$, is that it includes all five evolutionary events applied within current cophylogeny mapping algorithms including Jane, CoRe-PA and CoRe-ILP.}
\label{fig:TanglegramAndMap}
\end{figure}

\newpage

\begin{figure}[H]
\centering

\includegraphics[width=\textwidth]{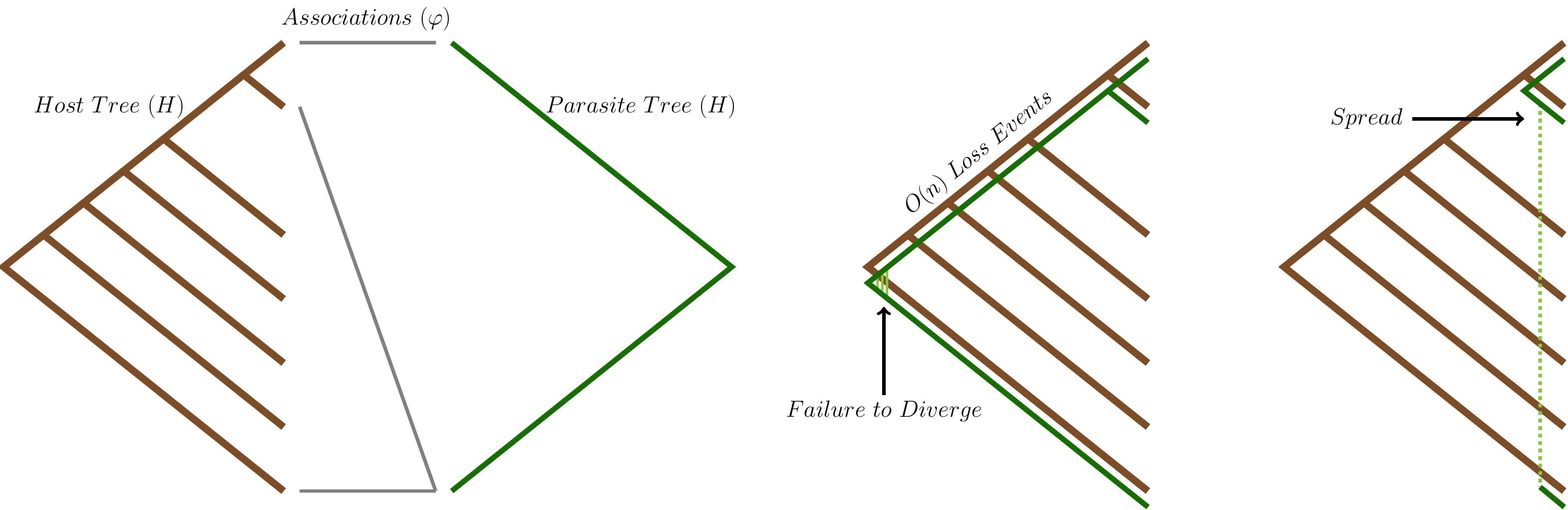}

\caption{A tanglegram (left) and two Pareto optimal solutions using either failure-to-diverge (center) or spread (right).}	
\label{fig:introducingSpread}
\end{figure}

\newpage

\begin{figure}[H]
\centering

\includegraphics[width=\textwidth]{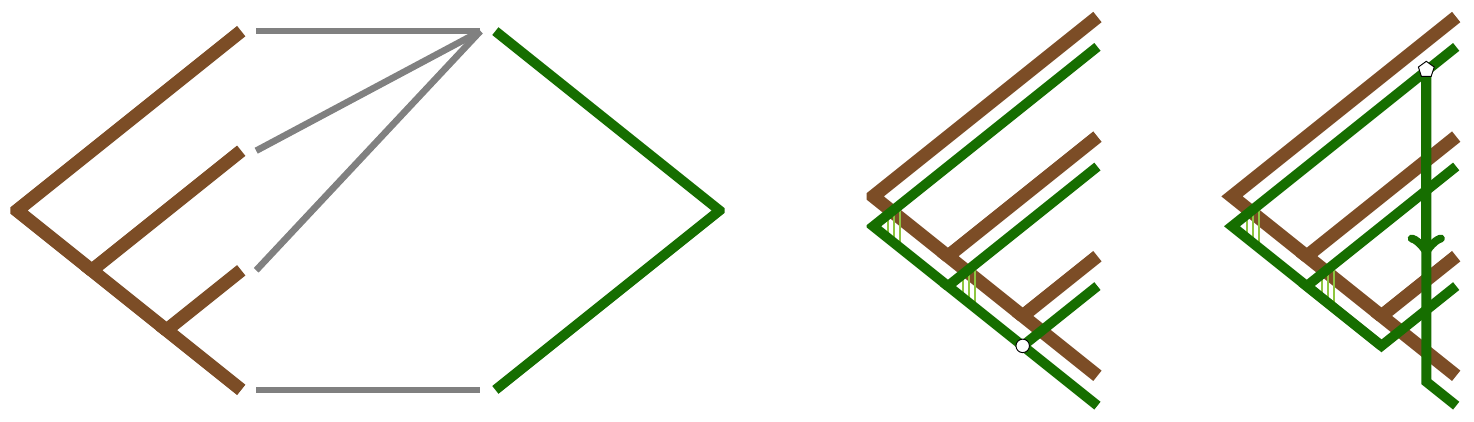}

\caption{Tanglegram which will demonstrate why current implementations of widespread parasitism reconciliation using cophylogeny mapping fail (left), and two recovered maps which include the map recovered from Jane (right), and an optimal map (center). The algorithm presented herein is the first method proposed capable of presenting an algorithmic solution capable of recovering the optimal map for this tanglegram.}

\label{fig:caseWhereJaneFails}

\end{figure}

\newpage

\begin{figure}[H]
\centering

\includegraphics[width=\textwidth]{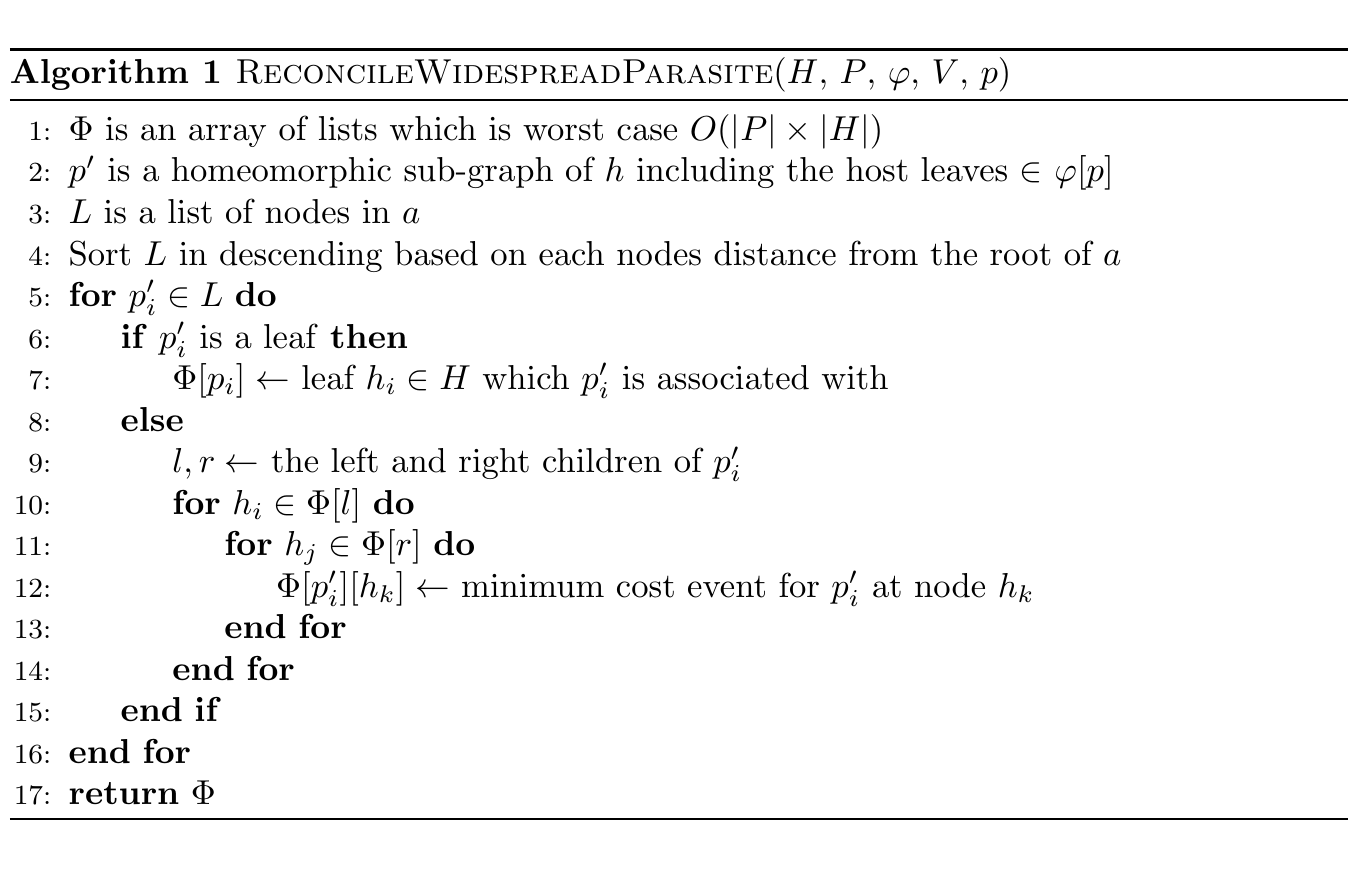}

\caption{The ReconcileWideSpreadParasite subroutine called from the WiSPA algorithm (see Figure~\ref{fig:wISPA}). This method outlines the process to infer the optimal set of widespread events from a set of association trees, $A$.}

\label{fig:reconcile}

\end{figure}

\newpage

\begin{figure}[H]
\centering

\includegraphics[width=\textwidth]{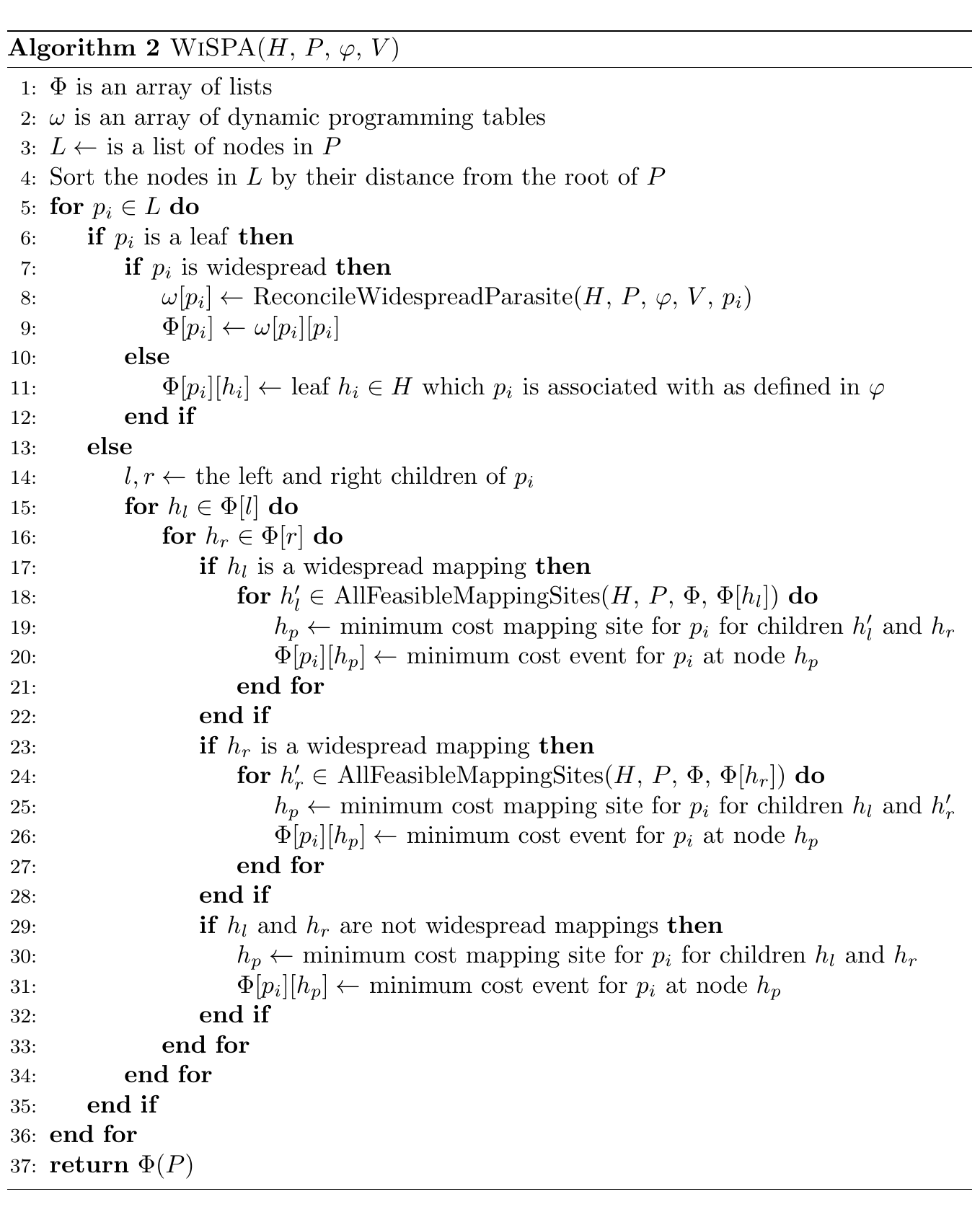}

\caption{The WiSPA algorithm which outlines the process for reconciling the optimal set of widespread and divergence events for a pair of phylogenetic trees ($H$ and $P$), based on the known associations $(\varphi)$ between the two trees.}

\label{fig:wISPA}

\end{figure}

\newpage

\begin{figure}[H]
\centering

\includegraphics[width=\textwidth]{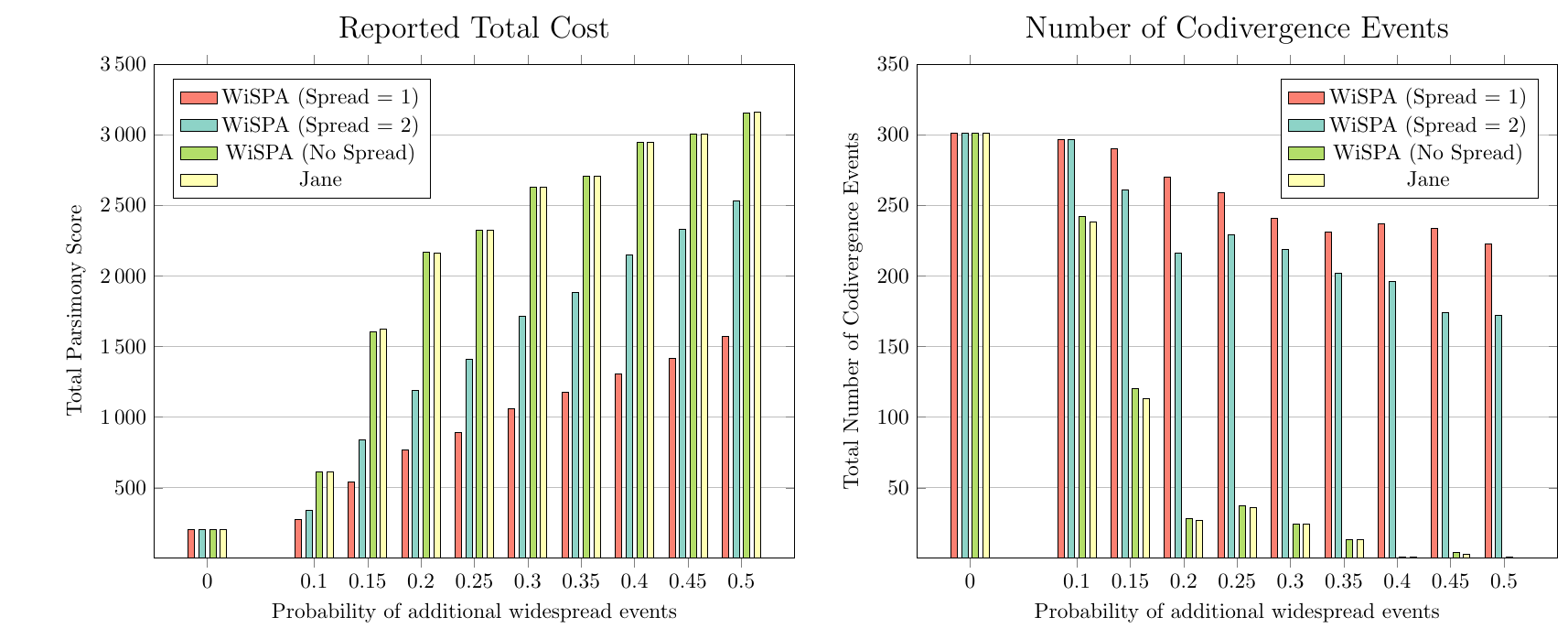}

\caption{The results for the synthetic data sets. The first plot (left) considers the rate at which the total cost over 50 synthetic coevolutionary models increases as the rate of widespread parasitism is increased, where the second plot (right) considers the rate at which the total number of codivergence events over 50 synthetic coevolutionary models decreases as the rate of widespread parastism is decreased.}

\label{fig:resultsForMetaheuristicPlots}

\end{figure}

\newpage

\begin{figure}[H]
\centering

\includegraphics[width=\textwidth]{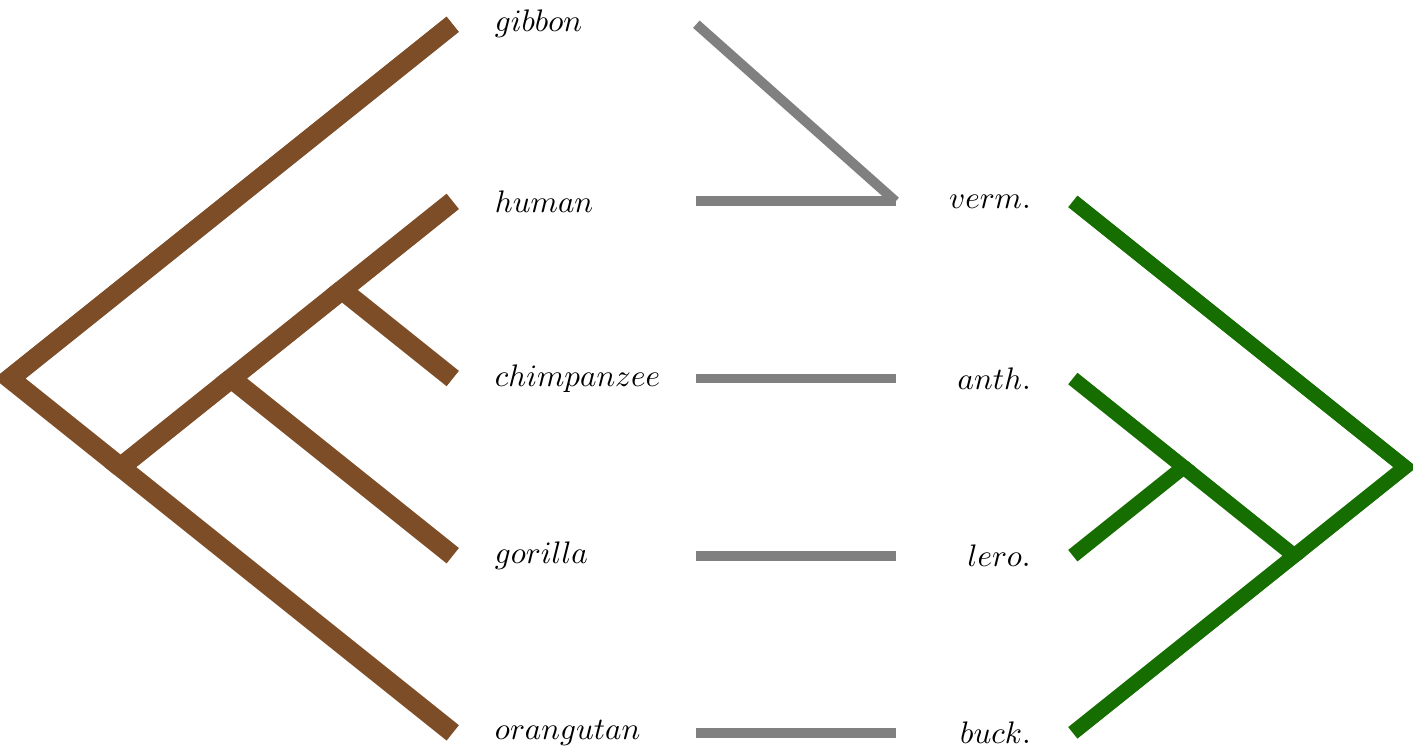}

\caption{\emph{Priamte}--\emph{Enterobius} tanglegram adapted from \citeauthor{brooks1982pinworms} \citeyearpar{brooks1982pinworms}, and \citeauthor{ronquist1997phylogenetic} \citeyearpar{ronquist1997phylogenetic} where the widespread associations are marked in red.}

\label{fig:primatePinwormsTanglegram}

\end{figure}

\newpage

\begin{figure}[H]
\centering

\includegraphics[width=\textwidth]{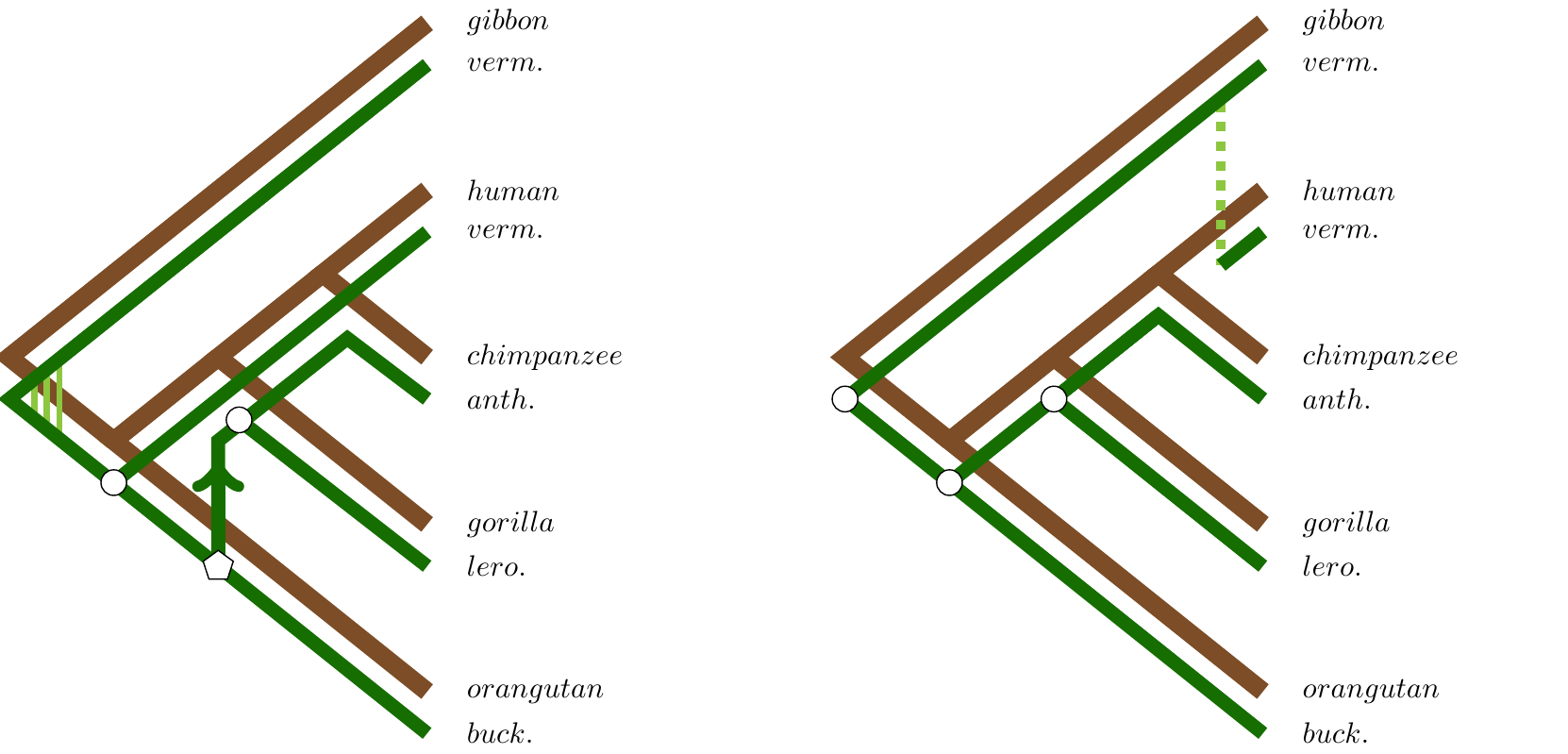}

\caption{Two optimal maps recovered for the Primate / Pinworms data set. The first map (left) is the optimal reconstruction inferred using Jane, while the second map (right) is the optimal reconstruction inferred by WiSPA.}

\label{fig:primatePinwormMaps}

\end{figure}

\newpage

\begin{figure}[H]
\centering

\includegraphics[width=\textwidth]{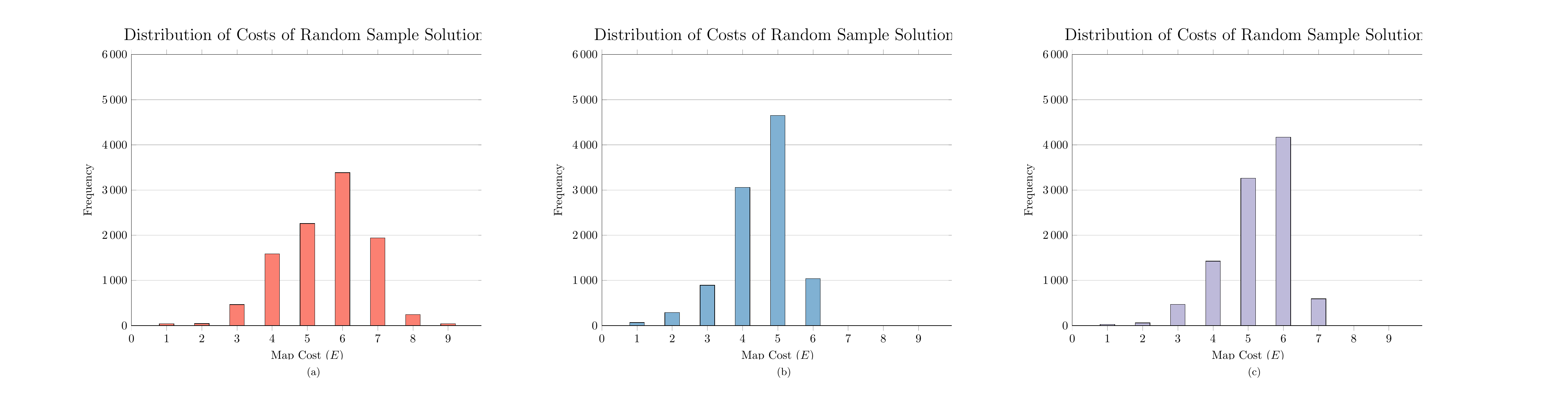}

\caption{Results from the Bernoulli trials where 10000 replicates were run. Plot (left) records the distribution of the optimal reconstruction inferred using Jane while plot (center) and (right) record the distribution of the optimal reconstruction inferred by WiSPA for the cost scheme $V=(0,1,2,1,1,1)$ and $V=(0,1,2,1,1,2)$ respectively.}

\label{fig:resultsForBernoulliTrails}

\end{figure}

\end{document}